\documentclass[lettersize,journal]{IEEEtran}

\IEEEoverridecommandlockouts      

\usepackage[utf8]{inputenc}
\usepackage[T1]{fontenc}

\usepackage{xcolor}

\usepackage{graphics} 
\usepackage{epsfig} 

\usepackage{cite}

\usepackage{amsmath} 
\usepackage{comment}

\usepackage{amssymb}  
\usepackage{subcaption}
\usepackage{amsthm}
\usepackage{algpseudocode}
\usepackage[ruled, vlined, linesnumbered]{algorithm2e}
\newtheorem{remark}{Remark}
\newtheorem{theorem}{Theorem}
\newtheorem{lemma}{Lemma}

\newtheorem{definition}{Definition}
\title{\LARGE \bf
 Status Updating in Two-Way Delay Systems with Preemption
}

\author{
\IEEEauthorblockN{Jinxin Yang, Mohammad~Moltafet, and Hamid R. Sadjadpour}

\thanks{ J. Yang is with School of Electronic and Electrical Engineering, University of Leeds, UK, email: el23jy2@leeds.ac.uk.
 M. Moltafet and H. R. Sadjadpour are with Department of Electrical and Computer Engineering University of California Santa Cruz, email: \{mmoltafe, hamid\}@ucsc.edu.
 }
}

\begin{document}

\maketitle
\thispagestyle{empty}
\pagestyle{empty}

\begin{abstract}
We consider a status update system consisting of a sampler, a sink, and a controller located at the sink. The controller sends requests to the sampler to generate and transmit status updates. Packet transmissions from the controller to the sampler (reverse link) and from the sampler to the sink (forward link) experience random delays. 
The reverse and forward links are modeled as servers with geometric service times, referred to as the controller and sampler servers, respectively. Each server is equipped with a single buffer that stores an arriving packet when the server is busy. We adopt a \textit{preemption-in-waiting} policy on both links, whereby an arriving packet replaces the packet in the buffer whenever the buffer is full.
Our main goal is to determine the optimal generation times of request packets at the controller in order to minimize the long-term average age of information (AoI) at the sink. We formulate the problem as a Markov decision process (MDP) and derive the optimal stationary deterministic policy using the relative value iteration (RVI) algorithm. We prove the convergence of the algorithm. Numerical results show that the proposed system consistently outperforms baseline policies from prior work and reveal a threshold-based structure for the optimal policy.

\textit{Index Terms:} Two-way delay systems, optimal status updating, Markov decision process.
\end{abstract}

\section{Introduction}
Timely information delivery is a critical requirement in networked monitoring and control systems, where decisions rely on status updates transmitted over unreliable and delay-prone communication links. As Internet-of-Things (IoT) applications continue to grow, the Age of Information (AoI) has emerged as a widely used metric for quantifying information timeliness, measuring the time elapsed since the generation of the most recently received update \cite{Kaul2012}. 
Unlike conventional 
metrics, AoI captures the combined effects of sampling, queueing, and transmission dynamics on information freshness \cite{Kaul2012,Costa2017, SunCyrsPAWC2018,Yates2020,Yates2021,Moltafet2025}.

In many status update systems, updates are not generated periodically but are instead requested on demand by a controller due to system constraints such as limited energy. Such request-based architectures inherently introduce two-way delays, as a request must first reach the sampler before a corresponding status update can be generated and transmitted to the sink.
This bidirectional delay structure couples sampling decisions with network congestion, resulting in a fundamental tradeoff between update frequency and information freshness. 

In this paper, we consider a status update system consisting of a controller-sampler link (reverse link) and a sampler-sink link (forward link). Each link is modeled as a single server with geometric service times and equipped with a one-packet buffer. To enhance information freshness, 
we adopt a \textit{preemption-in-waiting} policy. In this setup, newly generated packets replace any packet currently residing in the buffer on both the forward and reverse links.  
The main goal of this work is to design a control policy that determines when to generate request packets to minimize the long-term average AoI at the sink. We formulate this problem as a Markov decision process (MDP) and solve for an optimal stationary deterministic policy using the relative value iteration (RVI) algorithm.  We prove the convergence of the algorithm and show the structure of the optimal policy in the numerical results section. 

Status updating under two-way delay has been studied from different perspectives in the literature, e.g., \cite{Tsai2021,Pan2022,Moltafet2023,Wang2024}.
The works in \cite{Tsai2021,Pan2022,Wang2024} consider systems where both status update transmissions and acknowledgments of the received updates experience random delays.
In \cite{Tsai2021}, the authors assume that a sensor can generate a new packet only after the previous packet has been delivered to the destination. They develop optimal status updating policies to minimize the average AoI or to perform remote estimation. Similarly, in \cite{Pan2022}, the authors assume that a fresh packet can be generated only after the previous packet has been delivered to the destination, and they derive an optimal sampling policy that minimizes a long-term average AoI penalty under a sampling rate constraint.
The work in \cite{Wang2024} studies AoI minimization in systems where both status update transmissions and acknowledgments experience random delays with arbitrary distributions. Based on the delayed acknowledgments, the sampler decides when to generate and transmit a fresh update. Compared to \cite{Tsai2021,Pan2022,Wang2024}, our system is controller-centric, where request messages experience delays, whereas these works consider sampler-centric systems in which acknowledgments experience delays; hence, the settings are fundamentally different.
In \cite{Moltafet2023}, the authors investigate optimal status update control in a system where a controller located at the sink sends request packets to the sampler. The sampler  generates a fresh update upon receiving a request. Both the request message transmission and the status update transmission experience random delays. The authors study the problem of minimizing the average AoI for systems with at most one and two active request packets, referred to as the 1-Packet system and the 2-Packet system, respectively.
Compared to \cite{Moltafet2023}, the present work removes the constraint on the maximum number of active requests and introduces fully flexible request generation through a preemption-in-waiting policy on both the forward and reverse links. 

\section{System Model}\label{sec:system_model}
We consider a status update system consisting of a sampler, a controller, and a sink, as shown in Fig.~\ref{fig:System}. Time is divided into unit slots indexed by $t \in \{0,1,\ldots\}$, where each slot corresponds to the interval $[t, t+1)$. The sampler observes an underlying stochastic process, while the sink aims to maintain fresh information about the process.

\subsubsection{Request Messaging}
We allow on-demand sampling: the \emph{controller} at the sink sends \emph{request packets} to the sampler, and upon reception, the sampler immediately takes a sample. Each request packet experiences a random delivery delay, which we model as a \emph{controller server} along the reverse link. The server has geometrically distributed service times with mean $1/\gamma$. We also consider a \emph{controller buffer} that stores arriving request packets while the server is busy.

\subsubsection{Status Updating}
Upon reception of a request packet at the end of slot $t-1$,
the sampler generates a sample at the beginning of slot $t$. The sample is
encapsulated in a status update packet that carries the measured value and a
generation timestamp. Each update packet experiences a random delay before reaching
the sink. We model this as a sampler server with geometric service time of mean $1/\mu$. We also consider a \textit{sampler buffer}
that stores an arriving packet while the server is busy.
\begin{figure}
    \centering
    \includegraphics[width=1\linewidth]{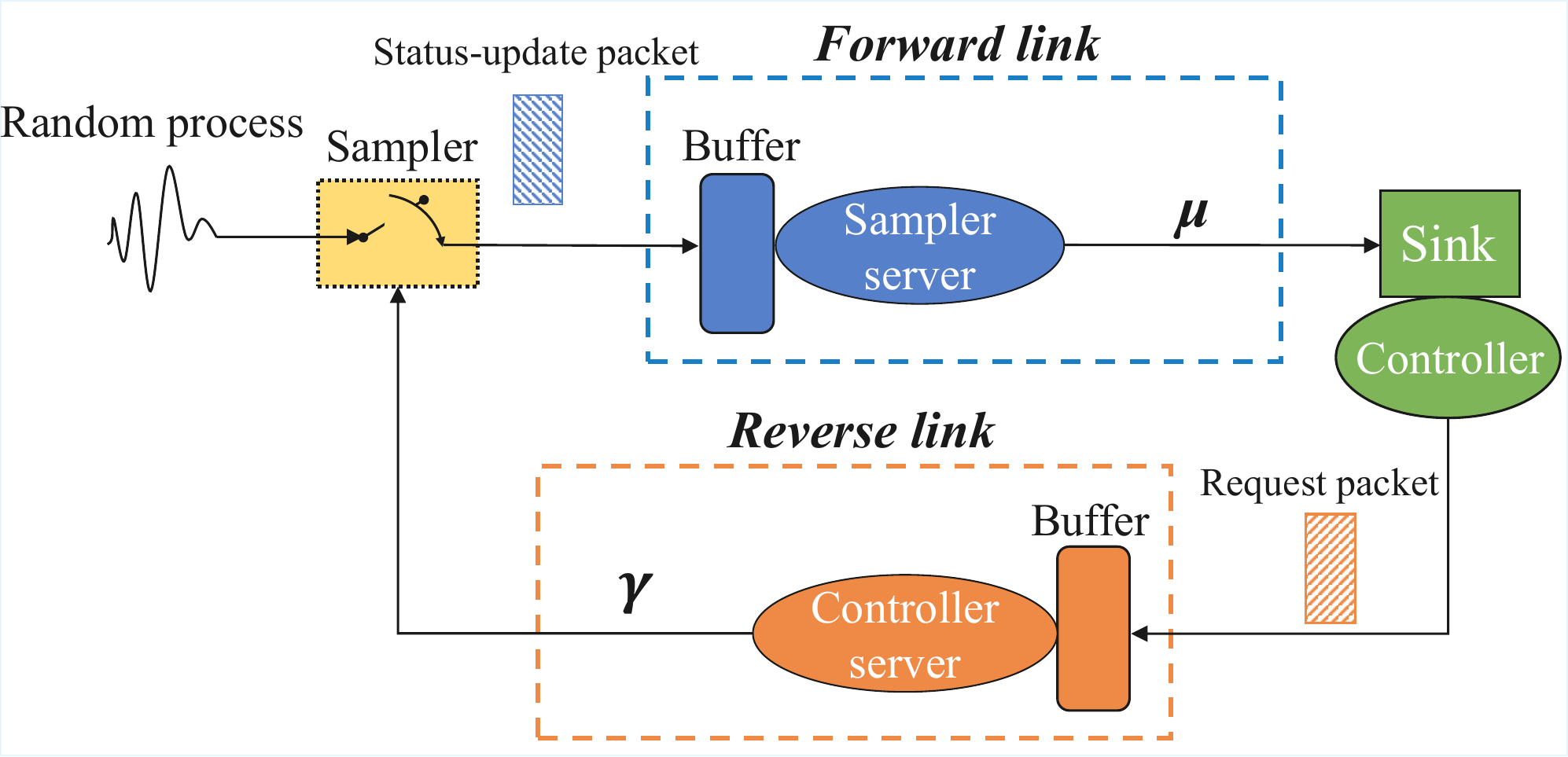}
    \caption{The considered system.}
    \label{fig:System}
\end{figure}

\subsubsection{Packet Management Policy} We adopt a preemption-in-waiting policy on both the reverse and forward links. According to this policy, 
when the server is idle, an arriving packet immediately enters service. 
If the server is busy, the packet is stored in the associated buffer. 
If the buffer is already occupied and a packet arrives, the newly arriving packet preempts the packet currently waiting in the buffer.
The packet in the buffer enters service immediately after the completion of the packet currently in service.

\subsubsection{AoI Definition}
The AoI at the sink is defined as the difference between the current time and the time stamp of the most recently received update. Let $t'_i$ denote the time slot at which the $i$-th status update is delivered to the sink, and let $N(t)$ denote the index of the most recently received status update packet by time slot $t$. Let $\delta^t$ denote the AoI at time slot $t$. Then, the AoI is given by the random process $\delta^t = t - t'_{N(t)}$.

\section{
Optimal Status Updating}\label{sec:OPTIMAL CONTROL FOR AVERAGE AoI MINIMIZATION}
This section develops an optimal control policy for the status update system. 
Our objective is to determine an optimal request-generation strategy  to minimize the  average AoI.

We make a standard assumption that the AoI is constrained by a sufficiently large value, denoted as $\bar{\Delta}$, i.e., $\delta^{t}\in\{1,2,\ldots,\bar{\Delta}\}$.
This assumption facilitates the design of optimal control policies by ensuring that the state space remains finite. Additionally, it reflects the idea that once the status becomes excessively outdated ,
any further increase in age becomes irrelevant for time-sensitive applications.

Next, we formulate the optimal status update control problem as an MDP and compute the optimal policy.
\subsubsection{State} 
The system state at time slot $t$ is denoted by  $s^{t} \in \mathcal{S}$, where $\mathcal{S}$ is the state space. The system state is defined as 
$s^{t} = \left( \delta^{t}, e_b^{t}, e_s^{t}, E_b^{t}, E_s^{t}, \Delta_b^{t}, \Delta_s^{t} \right)$
with the following seven elements:  
1) $\delta^{t}$ is the AoI at the sink,  
2) $e_b^{t} \in \{0,1\}$ is the occupancy indicator for the controller buffer: $e_b^{t}=1$ indicates that at slot $t$, there is a request packet at the controller buffer, and $e_b^{t}=0$ otherwise,
3) $e_s^{t} \in \{0,1\}$ is the occupancy indicator for the controller server,  
4) $E_b^{t} \in \{0,1\}$ is the occupancy indicator for the sampler buffer,  
5) $E_s^{t} \in \{0,1\}$ is the occupancy indicator for the sampler server,  
6) $\Delta_b^{t}$ is the age of the status update packet at the sampler buffer, and  
7) $\Delta_s^{t}$ is the age of the status update packet at the sampler server.  

\begin{remark}
The inclusion of the four occupancy variables $e_b^{t}$, $e_s^{t}$, $E_b^{t}$, and $E_s^{t}$ in the state description implicitly assumes that the controller has complete visibility of the entire system. In practical implementations, such full observability may not be available. Therefore, the optimal policy derived under this assumption serves as a benchmark, providing a performance upper bound for policies that operate with 
partial 
information.
\end{remark}

\subsubsection{Action}
At each time slot $t$, the controller determines an action $a^t \in \mathcal{A}=\{0,1\}$;  $a^t=1$ corresponds to generating a request packet at the beginning of slot $t$, 
and  $a^t=0$ otherwise.

\subsubsection{Policy} A policy $\pi$ is 
a function that maps each state in $\mathcal{S}$ to an action in $\mathcal{A}$. In particular, for a given system state $s^t$, the policy specifies the action $a^t$ to be taken at time slot $t$.

\subsubsection{Problem Formulation} For a given initial state $s^0 \in \mathcal{S}$, the long-term time-average AoI under a policy $\pi$ is defined as
\begin{equation}
J_{\pi}(s^0) = \limsup_{T \to \infty} \frac{1}{T} \sum_{t=0}^{T-1} \mathbb{E}_{\pi} \left[ \delta^t \mid s^0 \right],
\end{equation}
where the expectation is taken over the randomness in the actions selected according to policy 
$\pi$ and the stochastic service processes at both the controller and sampler servers.

Our goal is to determine an optimal policy $\pi^{*}$ that minimizes the 
long-term average AoI
\begin{equation}\label{eq:optimal policy}
    \pi^*(s^0) = \arg \min_{\pi \in \Pi} J_\pi(s^0).
\end{equation}
Next, we cast the problem as an MDP and solve it
via RVI.

\subsubsection{MDP Modeling of Problem} 

The problem is modeled as an MDP characterized by the tuple 
$(\mathcal{S}, \mathcal{A}, \Pr(s^{t+1}\mid s^t, a^t), C(s^t, a^t))$, 
where $C(s^t, a^t)$ represents the immediate cost incurred when action $a^t$ is taken in state $s^t$. 
In our formulation, the one-step cost is defined as $C(s^t, a^t) = \delta^{t+1}$, namely the AoI observed at the beginning of the subsequent slot $t+1$.
The transition probability
$\Pr(s' \mid s, a) = \Pr(s^{t+1}\mid s^t, a^t)$ 
specifies the probability that the system transfers from the state $s = s^t$ to the next state $s' = s^{t+1}$ when action $a = a^t$ is taken.

For notational convenience in expressing $\Pr(s' \mid s,a)$, we introduce the shorthand $\bar{\gamma} \triangleq 1-\gamma$ and $\bar{\mu} \triangleq 1-\mu$, along with the operator 
\begin{align}
    [\beta]^{\bar{\Delta}} \triangleq \min\{\beta+1,\bar{\Delta}\}.
\end{align}

We further use the symbol $\star$ to denote an undefined age value in cases where the sampler server is idle or the sampler buffer contains no packet.

The  transition probability from the current state 
$s = (\delta, e_b, e_s, E_b, E_s, \Delta_b, \Delta_s)$ 
to the subsequent state 
$s' = (\delta', e_b', e_s', E_b', E_s', \Delta_b', \Delta_s')$ under action $a$, shown as $\Pr\left(s' \mid s=(\delta, 0,0,0,0, \star, \star), a\right)$, is characterized in the following. 

\allowdisplaybreaks[4]

\begin{subequations} \label{eq:state-transition-3}
\begin{align}
&\Pr\!\left(s' \mid s=(\delta, 0,0,0,0, \star, \star), a\right) \notag\\
&\quad=\begin{cases}
1, & a=0,\, s'=\left([\delta]^{\bar{\Delta}}, 0,0,0,0, \star, \star\right), \\
\gamma, & a=1,\, s'=\left([\delta]^{\bar{\Delta}}, 0,0,0,1, \star, 0\right), \\
\bar{\gamma}, & a=1,\, s'=\left([\delta]^{\bar{\Delta}}, 0,1,0,0, \star, \star\right), \\
0, & \text{otherwise}.
\end{cases} \\[8pt]
&\Pr\!\left(s' \mid s=(\delta, 1,1,0,0, \star, \star), a\right) \notag\\
&\quad=\begin{cases}
\gamma, & a=0,\, s'=\left([\delta]^{\bar{\Delta}}, 0,1,0,1, \star, 0\right), \\
\bar{\gamma}, & a=0,\, s'=\left([\delta]^{\bar{\Delta}}, 1,1,0,0, \star, \star\right), \\
\gamma, & a=1,\, s'=\left([\delta]^{\bar{\Delta}}, 0,1,0,1, \star, 0\right), \\
\bar{\gamma}, & a=1,\, s'=\left([\delta]^{\bar{\Delta}}, 1,1,0,0, \star, \star\right), \\
0, & \text{otherwise}.
\end{cases} \\[8pt]
&\Pr\!\left(s' \mid s=(\delta, 0,1,0,0, \star, \star), a\right) \notag\\
&\quad=\begin{cases}
\gamma, & a=0,\, s'=\left([\delta]^{\bar{\Delta}}, 0,0,0,1, \star, 0\right), \\
\bar{\gamma}, & a=0,\, s'=\left([\delta]^{\bar{\Delta}}, 0,1,0,0, \star, \star\right), \\
\gamma, & a=1,\, s'=\left([\delta]^{\bar{\Delta}}, 0,1,0,1, \star, 0\right), \\
\bar{\gamma}, & a=1,\, s'=\left([\delta]^{\bar{\Delta}}, 1,1,0,0, \star, \star\right), \\
0, & \text{otherwise}.
\end{cases} \\[8pt]
&\Pr\!\left(s' \mid s=\left(\delta, 0,1,0,1, \star, \Delta_{\mathrm{s}}\right), a\right) \notag\\
&\quad=\begin{cases}
\bar{\gamma} \bar{\mu}, & a=0,\, s'=\left([\delta]^{\bar{\Delta}}, 0,1,0,1, \star,\left[\Delta_{\mathrm{s}}\right]^{\bar{\Delta}}\right), \\
\gamma \bar{\mu}, & a=0,\, s'=\left([\delta]^{\bar{\Delta}}, 0,0,1,1,0,\left[\Delta_{\mathrm{s}}\right]^{\bar{\Delta}}\right), \\
\bar{\gamma} \mu, & a=0,\, s'=\left(\left[\Delta_{\mathrm{s}}\right]^{\bar{\Delta}}, 0,1,0,0, \star, \star\right), \\
\gamma \mu, & a=0,\, s'=\left(\left[\Delta_{\mathrm{s}}\right]^{\bar{\Delta}}, 0,0,0,1, \star, 0\right), \\
\bar{\gamma} \bar{\mu}, & a=1,\, s'=\left([\delta]^{\bar{\Delta}}, 1,1,0,1, \star,\left[\Delta_{\mathrm{s}}\right]^{\bar{\Delta}}\right), \\
\gamma \bar{\mu}, & a=1,\, s'=\left([\delta]^{\bar{\Delta}}, 0,1,1,1,0,\left[\Delta_{\mathrm{s}}\right]^{\bar{\Delta}}\right), \\
\bar{\gamma} \mu, & a=1,\, s'=\left(\left[\Delta_{\mathrm{s}}\right]^{\bar{\Delta}}, 1,1,0,0, \star, \star\right), \\
\gamma \mu, & a=1,\, s'=\left(\left[\Delta_{\mathrm{s}}\right]^{\bar{\Delta}}, 0,1,0,1, \star, 0\right), \\
0, & \text{otherwise}.
\end{cases}
\end{align}
\end{subequations}

\addtocounter{equation}{-1}
\begin{subequations}\label{eq:state-transition-3-f-i}
\setcounter{equation}{4}
\begin{align}
&\Pr\!\left(s' \mid s=\left(\delta, 0,0,1,1, \Delta_{\mathrm{b}}, \Delta_{\mathrm{s}}\right), a\right) \notag\\
&\quad=\begin{cases}
\mu, & a=0,\, s'=\left(\left[\Delta_{\mathrm{s}}\right]^{\bar{\Delta}}, 0,0,0,1, \star, \left[\Delta_{\mathrm{b}}\right]^{\bar{\Delta}}\right), \\
\bar{\mu}, & a=0,\, s'=\left([\delta]^{\bar{\Delta}}, 0,0,1,1, \left[\Delta_{\mathrm{b}}\right]^{\bar{\Delta}}, \left[\Delta_{\mathrm{s}}\right]^{\bar{\Delta}}\right), \\
\bar{\gamma} \bar{\mu}, & a=1,\, s'=\left([\delta]^{\bar{\Delta}}, 0,1,1,1, \left[\Delta_{\mathrm{b}}\right]^{\bar{\Delta}}, \left[\Delta_{\mathrm{s}}\right]^{\bar{\Delta}}\right), \\
\gamma \bar{\mu}, & a=1,\, s'=\left([\delta]^{\bar{\Delta}}, 0,0,1,1, 0, \left[\Delta_{\mathrm{s}}\right]^{\bar{\Delta}}\right), \\
\bar{\gamma} \mu, & a=1,\, s'=\left(\left[\Delta_{\mathrm{s}}\right]^{\bar{\Delta}}, 0,1,0,1, \star, \left[\Delta_{\mathrm{b}}\right]^{\bar{\Delta}}\right), \\
\gamma \mu, & a=1,\ s'=\left(\left[\Delta_{\mathrm{s}}\right]^{\bar{\Delta}}, 0,0,0,1, *,0\right), \\
0, & \text{otherwise}.
\end{cases} \\[6pt]
&\Pr\!\left(s' \mid s=\left(\delta, 0,0,0,1, \star, \Delta_{\mathrm{s}}\right),\, a\right) \notag\\
&\quad=\begin{cases}
\mu, & a=0,\, s'=\left(\left[\Delta_{\mathrm{s}}\right]^{\bar{\Delta}}, 0,0,0,0, \star,\star\right), \\
\bar{\mu}, & a=0,\, s'=\left([\delta]^{\bar{\Delta}}, 0,0,0,1,\star,\left[\Delta_{\mathrm{s}}\right]^{\bar{\Delta}}\right), \\
\bar{\gamma} \bar{\mu}, & a=1,\, s'=\left([\delta]^{\bar{\Delta}}, 0,1,0,1, \star,\left[\Delta_{\mathrm{s}}\right]^{\bar{\Delta}}\right), \\
\gamma \bar{\mu}, & a=1,\, s'=\left([\delta]^{\bar{\Delta}}, 0,0,1,1,0,\left[\Delta_{\mathrm{s}}\right]^{\bar{\Delta}}\right), \\
\bar{\gamma} \mu, & a=1,\, s'=\left(\left[\Delta_{\mathrm{s}}\right]^{\bar{\Delta}}, 0,1,0,0, \star, \star\right), \\
\gamma \mu, & a=1,\, s'=\left(\left[\Delta_{\mathrm{s}}\right]^{\bar{\Delta}}, 0,0,0,1,\star, 0\right), \\
0, & \text{otherwise}.
\end{cases} \\[6pt]
&\Pr\!\left(s' \mid s=\left(\delta, 1,1,1,1, \Delta_{\mathrm{b}}, \Delta_{\mathrm{s}}\right), a\right) \notag\\
&\quad=\begin{cases}
\bar{\gamma} \bar{\mu}, & a=0,\, s'=\left([\delta]^{\bar{\Delta}}, 1,1,1,1, \left[\Delta_{\mathrm{b}}\right]^{\bar{\Delta}}, \left[\Delta_{\mathrm{s}}\right]^{\bar{\Delta}}\right), \\
\gamma \bar{\mu}, & a=0,\, s'=\left([\delta]^{\bar{\Delta}}, 0,1,1,1, 0, \left[\Delta_{\mathrm{s}}\right]^{\bar{\Delta}}\right), \\
\bar{\gamma} \mu, & a=0,\, s'=\left(\left[\Delta_{\mathrm{s}}\right]^{\bar{\Delta}}, 1,1,0,1, \star, \left[\Delta_{\mathrm{b}}\right]^{\bar{\Delta}}\right), \\
\gamma \mu, & a=0,\, s'=\left(\left[\Delta_{\mathrm{s}}\right]^{\bar{\Delta}}, 0,1,0,1, *,0\right), 
\\
\bar{\gamma} \bar{\mu}, & a=1,\, s'=\left([\delta]^{\bar{\Delta}}, 1,1,1,1, \left[\Delta_{\mathrm{b}}\right]^{\bar{\Delta}}, \left[\Delta_{\mathrm{s}}\right]^{\bar{\Delta}}\right), \\
\gamma \bar{\mu}, & a=1,\, s'=\left([\delta]^{\bar{\Delta}}, 0,1,1,1, 0, \left[\Delta_{\mathrm{s}}\right]^{\bar{\Delta}}\right), \\
\bar{\gamma} \mu, & a=1,\, s'=\left(\left[\Delta_{\mathrm{s}}\right]^{\bar{\Delta}}, 1,1,0,1, \star, \left[\Delta_{\mathrm{b}}\right]^{\bar{\Delta}}\right), \\
\gamma \mu, & a=1,\, s'=\left(\left[\Delta_{\mathrm{s}}\right]^{\bar{\Delta}}, 0,1,0,1,*,0\right),\\
0, & \text{otherwise}.
\end{cases} \\[6pt]
&\Pr\!\left(s' \mid s=\left(\delta, 1,1,0,1,\star, \Delta_{\mathrm{s}}\right), a\right) \notag\\
&\quad=\begin{cases}
\bar{\gamma}\bar{\mu}, & a=0,\, s'=\left([\delta]^{\bar{\Delta}}, 1,1,0,1, \star, \left[\Delta_{\mathrm{s}}\right]^{\bar{\Delta}}\right), \\
\gamma\bar{\mu}, & a=0,\, s'=\left([\delta]^{\bar{\Delta}}, 0,1,1,1, 0, \left[\Delta_{\mathrm{s}}\right]^{\bar{\Delta}}\right), \\
\bar{\gamma}\mu, & a=0,\, s'=\left(\left[\Delta_{\mathrm{s}}\right]^{\bar{\Delta}}, 1,1,0,0, \star, \star\right), \\
\gamma\mu, & a=0,\, s'=\left(\left[\Delta_{\mathrm{s}}\right]^{\bar{\Delta}}, 0,1,0,1, \star, 0\right), \\
\bar{\gamma}\bar{\mu}, & a=1,\, s'=\left([\delta]^{\bar{\Delta}}, 1,1,0,1, \star, \left[\Delta_{\mathrm{s}}\right]^{\bar{\Delta}}\right), \\
\gamma\bar{\mu}, & a=1,\, s'=\left([\delta]^{\bar{\Delta}}, 0,1,1,1, 0, \left[\Delta_{\mathrm{s}}\right]^{\bar{\Delta}}\right), \\
\bar{\gamma}\mu, & a=1,\, s'=\left(\left[\Delta_{\mathrm{s}}\right]^{\bar{\Delta}}, 1,1,0,0, \star, \star\right), \\
\gamma\mu, & a=1,\, s'=\left(\left[\Delta_{\mathrm{s}}\right]^{\bar{\Delta}}, 0,1,0,1, \star, 0\right), \\
0, & \text{otherwise}.
\end{cases} \\[6pt]
&\Pr\!\left(s' \mid s=\left(\delta, 0,1,1,1, \Delta_{\mathrm{b}}, \Delta_{\mathrm{s}}\right), a\right) \notag\\
&\quad=\begin{cases}
\bar{\gamma}\bar{\mu}, & a=0,\, s'=\left([\delta]^{\bar{\Delta}}, 0,1,1,1, \left[\Delta_{\mathrm{b}}\right]^{\bar{\Delta}}, \left[\Delta_{\mathrm{s}}\right]^{\bar{\Delta}}\right), \\
\gamma\bar{\mu}, & a=0,\, s'=\left([\delta]^{\bar{\Delta}}, 0,0,1,1, 0, \left[\Delta_{\mathrm{s}}\right]^{\bar{\Delta}}\right), \\
\bar{\gamma}\mu, & a=0,\, s'=\left(\left[\Delta_{\mathrm{s}}\right]^{\bar{\Delta}}, 0,1,0,1, \star, \left[\Delta_{\mathrm{b}}\right]^{\bar{\Delta}}\right), \\
\gamma\mu, & a=0,\, s'=\left(\left[\Delta_{\mathrm{s}}\right]^{\bar{\Delta}}, 0,0,0,1, *, 0\right), \\
\bar{\gamma}\bar{\mu}, & a=1,\, s'=\left([\delta]^{\bar{\Delta}}, 1,1,1,1, \left[\Delta_{\mathrm{b}}\right]^{\bar{\Delta}}, \left[\Delta_{\mathrm{s}}\right]^{\bar{\Delta}}\right), \\
\gamma\bar{\mu}, & a=1,\, s'=\left([\delta]^{\bar{\Delta}}, 0,1,1,1, 0, \left[\Delta_{\mathrm{s}}\right]^{\bar{\Delta}}\right), \\
\bar{\gamma}\mu, & a=1,\, s'=\left(\left[\Delta_{\mathrm{s}}\right]^{\bar{\Delta}}, 1,1,0,1, \star, \left[\Delta_{\mathrm{b}}\right]^{\bar{\Delta}}\right), \\
\gamma\mu, & a=1,\, s'=\left(\left[\Delta_{\mathrm{s}}\right]^{\bar{\Delta}}, 0,1,0,1, *, 0 \right), 
\\
0, & \text{otherwise}.
\end{cases}
\end{align}
\end{subequations}

Based on the MDP construction above, the problem is reformulated as the following MDP problem
\begin{equation} \label{eq:markov decesion}
\pi^*(s^0) = \underset{\pi \in \Pi}{\arg\min}
\left\{
  \limsup_{T \to \infty} \frac{1}{T} \sum_{t=0}^{T-1}
  \mathbb{E}\left[ C(s^t, a^t) \mid s^0 \right]
\right\}.
\end{equation}
\subsubsection{Optimal Policy} According to \cite[Proposition 4.2.1]{Bertsekas2007}, for a given initial state $s^0$, if there exist a scalar $\bar{V}$ and a set of values $\{V^*_s : s \in \mathcal{S}\}$ that satisfy the following equation:
\begin{equation}\label{eq:bellman}
\bar{V} + V^*_s =  \min_{a \in \mathcal{A}} \left\{
    \sum_{s' \in \mathcal{S}} \Pr(s'|s, a) \left(C(s, a) + V^*_{s'}\right)
\right\},
\end{equation}
then $\bar{V}$ represents the optimal average cost of the original problem,  $J_{\pi^*}(s^0)$, and  actions that achieves the minimum in~\eqref{eq:bellman} gives rise to an optimal stationary policy,  $\pi^*(s^0)$.

We next introduce a condition that guarantees
the existence of a solution to the Bellman equation
in~\eqref{eq:bellman}, and ensures that the resulting optimal
average cost does not depend on the initial state $s^{0}$\cite[Proposition 4.2.3]{Bertsekas2007}.
\begin{definition}[Weak Accessibility]
An MDP satisfies the weak accessibility condition if its state space can be partitioned into $\mathcal{S}_c$ and $\mathcal{S}_t$ such that:
I) For any $s, s' \in \mathcal{S}_c$, there exists a stationary policy under which the probability of reaching $s'$ from $s$ in some finite number of steps is non-zero, and
II) All states in $\mathcal{S}_t$ are transient under every stationary policy\cite[Definition 4.2.2]{Bertsekas2007}
\end{definition}

We now verify that the MDP in~\eqref{eq:markov decesion} satisfies the weak accessibility condition.

\begin{lemma}
For any $(\mu, \gamma) \in \{ (\theta_1, \theta_2) \mid 0 < \theta_1, \theta_2 \le 1,\ (\theta_1,\theta_2) \ne (1,1) \}$,
the weak accessibility condition holds for the MDP problem in~\eqref{eq:markov decesion}.
\end{lemma}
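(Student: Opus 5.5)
We verify weak accessibility by fixing the reachable set $\mathcal{S}_c$ of "accessible" states and showing, first, that every state of $\mathcal{S}_c$ can reach a common reference state, and second, that this reference state can reach every state of $\mathcal{S}_c$. All remaining states go into $\mathcal{S}_t$. As reference state we take the empty-system state with maximal age, $s^\dagger=(\bar{\Delta},0,0,0,0,\star,\star)$.

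\textbf{Step 1 (every state reaches $s^\dagger$).} Under the stationary policy that always chooses $a=0$, no new requests are generated. Each busy server completes service in any slot with probability $\gamma>0$ or $\mu>0$, so the two links drain. With positive probability the system reaches an all-empty state $(\delta,0,0,0,0,\star,\star)$ within at most four slots. For example, from $(\delta,1,1,1,1,\cdot,\cdot)$ a controller completion moves the buffered request into service, and further completions follow, using the transitions in \eqref{eq:state-transition-3}. Under $a=0$ the all-empty state is absorbing with age increasing by $[\cdot]^{\bar{\Delta}}$, so $s^\dagger$ is reached after at most $\bar{\Delta}$ further slots. Hence every state reaches $s^\dagger$ in finitely many steps with positive probability.

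\textbf{Step 2 (define $\mathcal{S}_c$ and show $s^\dagger$ reaches it).} Let $\mathcal{S}_c$ be the set of states reachable from $s^\dagger$ with positive probability under some sequence of actions. We will argue that a single stationary policy suffices for each target $s'$. Fix a shortest action path from $s^\dagger$ to $s'$. Along this path the age components strictly change, and the path can be chosen so that no state repeats. The policy can then be made stationary by assigning each visited state its path action, with arbitrary actions elsewhere. Combining this with Step 1, the concatenation is: drain with $a=0$ to $s^\dagger$, then follow the path to $s'$. The two legs may visit a common state with conflicting prescribed actions. To avoid this, route the draining leg only through states on which it agrees with the path, or cut the concatenation at the first common state. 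Because the draining dynamics and the path both use the same transition structure, taking the first intersection point gives a valid stationary policy for any pair $s,s'\in\mathcal{S}_c$. This establishes condition I.

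\textbf{Step 3 (states outside $\mathcal{S}_c$ are transient).} Any state not reachable from $s^\dagger$ lies in $\mathcal{S}_t$; examples are inconsistent age combinations such as $\Delta_b\ge \Delta_s$ that are impossible under preemption. By Step 1, under every stationary policy each state reaches $\mathcal{S}_c$ with positive probability. We must therefore show that $\mathcal{S}_c$ is closed, i.e., that no chain leaves it. If $s\in\mathcal{S}_c$ and $s\to s''$ has positive probability under some action, then $s''$ is reachable from $s^\dagger$ and so lies in $\mathcal{S}_c$. Consequently $\mathcal{S}_c$ is closed under every policy, and by Step 1 every state of $\mathcal{S}_t$ leads to this closed class. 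A state of $\mathcal{S}_t$ cannot be returned to once the chain enters $\mathcal{S}_c$, so all states of $\mathcal{S}_t$ are transient. This establishes condition II.

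\textbf{Role of $(\mu,\gamma)\neq(1,1)$ and the main obstacle.} When $\mu=\gamma=1$, transitions become deterministic and some age values force a parity or lockstep structure. The hardest part is confirming that, away from this case, the path construction in Step 2 really yields a stationary policy, i.e., that the draining leg and the forward path can be merged without conflicts. If needed, the approach would be to exploit that $\bar\gamma>0$ or $\bar\mu>0$ allows "waiting" self-loop-like moves with age increments, which break such conflicts.
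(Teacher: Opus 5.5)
The gap is in Step 3. Your Steps 1 and 2 are sound: Step 1 holds for all $\gamma,\mu>0$, and Step 2 is the standard shortest-path/loop-erasure argument. Checking weak accessibility directly, with $\mathcal{S}_c$ taken as the set reachable from $s^\dagger$, is a different route from the paper's; the paper argues that every deterministic policy induces a unichain.

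Condition II requires every state of $\mathcal{S}_t$ to be transient under \emph{every} stationary policy. In other words, under every policy each such state must be able to enter $\mathcal{S}_c$. Your ``By Step 1'' gives this only for the always-idle policy, and nothing transfers to other policies. For example, under the always-request policy no all-empty state is ever entered from a nonempty one, because every transition into $(\cdot,0,0,0,0,\star,\star)$ uses $a=0$.

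The claim actually fails when $\gamma=1$, which your hypothesis admits.
\begin{itemize}
\item With $\gamma=1$ we have $\bar\gamma=0$, so the occupancy configurations $(e_b,e_s,E_b,E_s)\in\{(0,0,0,0),(0,0,0,1),(0,0,1,1)\}$ are closed under both actions. Hence $\mathcal{S}_c$ contains no state with $e_s=1$.
\item Under the always-request policy, the configurations $(0,1,0,1)$ and $(0,1,1,1)$ form a closed set, because the buffered request refills the controller server every slot. So some of these states are recurrent under that policy, yet they lie in $\mathcal{S}_t$.
\item $(\bar\Delta,0,0,0,0,\star,\star)$ is absorbing under the always-idle policy, so it must lie in $\mathcal{S}_c$, but it can never reach those $e_s=1$ states.
\end{itemize}
Therefore no partition at all satisfies weak accessibility for $\gamma=1$ on the state space implied by the transition tables. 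Either $\gamma=1$ must be excluded or $\mathcal{S}$ must be restricted. The relevant degeneracy is not the $(1,1)$ ``lockstep'' you mention, and your argument never uses that hypothesis anyway.

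For $\gamma<1$ your partition does work, but Step 3 needs a policy-independent argument. Here is one.
\begin{itemize}
\item Whatever action is taken, the outcome ``every busy server completes in this slot'' has probability at least $\gamma\mu>0$.
\item By the transition tables, two such slots in a row take any state to one of $(d,0,0,0,0,\star,\star)$, $(d,0,0,0,1,\star,0)$, or $(d,0,1,0,1,\star,0)$.
\item Each of these is reachable from $s^\dagger$ for every $d$. The last family is reached through $(k,0,1,0,0,\star,\star)$, which needs $\bar\gamma>0$.
\end{itemize}
Hence under every policy every state enters $\mathcal{S}_c$ with positive probability, and closedness of $\mathcal{S}_c$ then gives condition II.

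This direct route is preferable to the paper's unichain route, which cannot work here. Take the policy that idles exactly in the empty configuration and requests otherwise: under it, the empty states and the nonempty states form two disjoint closed sets, so it is not unichain.
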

\begin{proof}
The weak accessibility condition holds if every deterministic policy in $\Pi$ induces a \emph{unichain} \cite[Proposition 4.2.5]{Bertsekas2007}.  
A deterministic policy is called unichain if, under the Markov chain it defines, there exists a state that can be reached from every other state with nonzero probability \cite[Exercise 4.3]{Gallager2013}.
In the MDP formulation~\eqref{eq:markov decesion}, we verify this property for all deterministic policies as follows:
\begin{itemize}
    \item[I)] If $\gamma \ne 1$ and $0 < \mu \le 1$, the state $(\bar{\Delta}, 1, 1, 0, 0, \star, \star)$ is reachable from any other state.
    \item[II)] If $\gamma = 1$ and $0 < \mu < 1$, the state $(\bar{\Delta}, 0, 0, 1, 1, 0, \bar{\Delta})$ is reachable from any other state.
\end{itemize}

For case~I, this follows because when both the controller server and buffer contain packets, the probability that all transmissions fail over $\bar{\Delta}$ consecutive slots is strictly positive.  
A similar argument applies to case~II.  
Therefore, every deterministic policy induces a unichain, and the weak accessibility condition holds.
\end{proof}
\begin{algorithm}[tbp] 
\SetAlgoLined
\SetKwInOut{KwInit}{Initialization}
\caption{RVI Algorithm}\label{alg:RVI}
\KwIn{1) State transition probabilities $\Pr(s' \mid s, a)$, 
2) stopping criterion threshold $\epsilon$.}

\KwInit{1) Set $V_s^\ast = 0,\ \forall s \in \mathcal{S}$, 
2) select an arbitrary reference state $s_{\mathrm{ref}} \in \mathcal{S}$, 
and 3) set $\phi > \epsilon$.}

\While{$\phi > \epsilon$}{
    \For{$s \in \mathcal{S}$}{
        \mbox{\!$\pi_s^\ast \!\!\gets\! 
        \arg\min_{a \in \mathcal{A}}\!
        \Big\{\!\!\textstyle\sum_{s' \in \mathcal{S}}
        \Pr(s' \!\mid\! s, a)
        \big(c(s,a) \!+\! V_{s'}^\ast\big)
        \!\!\Big\}$\!}\;
        
        \!\!$V_s \gets 
        \sum_{s' \in \mathcal{S}}
        \Pr(s' \mid s, \pi_s^\ast)
        \big(c(s,\pi_s^\ast) + V_{s'}^\ast\big)
        - V_{s_{\mathrm{ref}}}^\ast$\;
    }
    
    $\phi \gets \max_{s \in \mathcal{S}} \lvert V_s - V_s^\ast \rvert$\;
    $V_s^\ast \gets V_s,\ \forall s \in \mathcal{S}$\;
}

\KwOut{1) Optimal policy $\pi^\ast$, 2) optimal objective value 
$J_{\pi^\ast} = V_{s_{\mathrm{ref}}}^\ast$.}

\end{algorithm}
Under the weak accessibility condition, the optimal
stationary deterministic policy $\pi^{*}$ and the associated
average cost $J^{*}$ are well defined and independent of
$s^{0}$ \cite[Proposition 4.2.6]{Bertsekas2007}. 

The Bellman equation~\eqref{eq:bellman} can be solved using the RVI algorithm as described in \cite[Section 4.3]{Bertsekas2007}. The steps of the RVI algorithm are outlined in Algorithm~\ref{alg:RVI}. In each iteration, for every state $s \in \mathcal{S}$, the optimal action $\pi_s^{*}$
and value function $V^*_s$ are obtained
according to Lines~3 and~4, where an arbitrary reference
state $s_{\mathrm{ref}} \in \mathcal{S}$ is selected to normalize
the value function. Upon convergence, the algorithm returns
the optimal policy $\pi^{*}$ and the minimum average AoI,
given by $J_{\pi^{*}} = V^{*}_{s_{\mathrm{ref}}}$. 
%
%
The following theorem establishes that the RVI algorithm in  Algorithm~\ref{alg:RVI} converges to an optimal deterministic policy.

\begin{theorem}
The RVI algorithm in Algorithm 1 converges to an optimal deterministic policy and yields the optimal value of the average AoI. 
\end{theorem}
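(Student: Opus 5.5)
Our plan is to reduce the statement to the standard convergence result for relative value iteration in finite average-cost MDPs, \cite[Proposition 4.3.2]{Bertsekas2007}. That result requires two properties. First, the state and action spaces must be finite. Second, a unichain-type condition must hold, together with an aperiodicity condition that excludes oscillation of the iterates $V^\ast_s - V^\ast_{s_{\mathrm{ref}}}$. Once both properties are checked, the iterates of Algorithm~\ref{alg:RVI} converge to a solution $(\bar V, \{V^\ast_s\})$ of the Bellman equation~\eqref{eq:bellman}. By \cite[Proposition 4.2.1]{Bertsekas2007}, the minimizing actions in Line~3 then form an optimal stationary deterministic policy, and $V^\ast_{s_{\mathrm{ref}}}=\bar V=J_{\pi^\ast}$ is the optimal average AoI.

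The key steps, in order, are as follows.

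\textbf{Finiteness.} Since $\delta,\Delta_b,\Delta_s\in\{0,\dots,\bar\Delta\}\cup\{\star\}$ and the four occupancy indicators are binary, $\mathcal S$ is finite. The action set $\mathcal A=\{0,1\}$ is also finite, and the cost $C(s,a)\le\bar\Delta$ is bounded.

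\textbf{Unichain and accessibility.} We invoke the preceding Lemma. Its proof shows that every deterministic policy induces a unichain with a common recurrent state $s^\dagger$. Here $s^\dagger=(\bar\Delta,1,1,0,0,\star,\star)$ when $\gamma\neq1$, and $s^\dagger=(\bar\Delta,0,0,1,1,0,\bar\Delta)$ when $\gamma=1$. This gives existence of a solution of~\eqref{eq:bellman} with $\bar V$ independent of $s^0$.

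\textbf{Aperiodicity.} This is the step we expect to be the main obstacle, because the Lemma alone does not imply that RVI converges without damping. We plan to show that $s^\dagger$ has a self-loop with positive probability under either action:
\begin{itemize}
\item For $\gamma\neq1$, the transitions from $(\bar\Delta,1,1,0,0,\star,\star)$ give probability $\bar\gamma>0$ of returning to the same state, for both $a=0$ and $a=1$, since $[\bar\Delta]^{\bar\Delta}=\bar\Delta$.
\item For $\gamma=1$, the state $(\bar\Delta,0,0,1,1,\Delta_b,\bar\Delta)$ stays in its class with probability $\bar\mu>0$ under $a=0$. Under $a=1$ it moves to a state with buffer age reset to $0$ with probability $\bar\mu$.
\end{itemize}
In the second case, the choice of recurrent state with buffer age $0$ needs to be adjusted. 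We would instead use any state reached after a self-loop on the pair $(\delta,\Delta_s)=(\bar\Delta,\bar\Delta)$. The two-step return probabilities are positive for both return lengths $1$ and $2$, which makes the chain aperiodic.

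Hence every stationary deterministic policy induces an aperiodic unichain sharing a state reached from all states. Equivalently, for some integer $m$ there is a state reached within $m$ steps from every state under every policy, which is exactly the hypothesis of \cite[Proposition 4.3.2]{Bertsekas2007}.

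\textbf{Stopping rule.} With the convergence hypothesis verified, $\phi\to0$, so the stopping criterion is met in finite time. Because $\mathcal A$ is finite and the value iterates converge, the argmin in Line~3 is eventually an optimal action for $\epsilon$ small enough. This yields an optimal deterministic policy and the optimal average AoI $J_{\pi^\ast}=V^\ast_{s_{\mathrm{ref}}}$.

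If aperiodicity fails at some parameter values, the fallback is the aperiodicity transformation $\tilde P=\tau I+(1-\tau)P$ of \cite[Section 4.3]{Bertsekas2007}. This transformation preserves optimal policies and guarantees convergence.
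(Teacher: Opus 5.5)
Your route is the paper's route: finite MDP, unichain from the proof of Lemma~1, a self-loop at the Lemma~1 state, then Bertsekas's RVI result. But both inputs are false as stated, so there is a genuine gap, and the paper's own proof shares it. First, no state is reachable from every state under every deterministic policy, and not every deterministic policy is unichain. Under $a\equiv 0$ the system drains and $(\bar\Delta,0,0,0,0,\star,\star)$ is absorbing, so $s^\dagger$ is unreachable from it. Conversely, every $a=1$ transition in the tables out of a state holding some packet leads to a state holding some packet. Hence the policy that requests everywhere except at $(\bar\Delta,0,0,0,0,\star,\star)$, where it idles, has two recurrent classes. So the hypothesis you paraphrase as ``a state reached within $m$ steps from every state under every policy'' fails. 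Weak accessibility by itself yields only a constant optimal gain and a Bellman solution, not convergence of undamped RVI.

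You rightly noticed that $(\bar\Delta,0,0,1,1,0,\bar\Delta)$ has no self-loop under $a=0$; the paper overlooks this. However, your patch for $\gamma=1$ fails, and no choice of state can rescue it. With $\gamma=1$ and an empty controller, a request puts an age-$0$ sample into the sampler within the slot. Let $y$ be the age of the freshest packet: $\Delta_b$ if $E_b=1$, else $\Delta_s$ if $E_s=1$, else $\delta$. The tables give $y'=0$ under $a=1$ and $y'=\min\{y+1,\bar\Delta\}$ under $a=0$. Under the policy ``request iff $y$ is odd or $y=\bar\Delta$'', $y$ eventually alternates $0,1,0,\dots$. So the recurrent class containing the $2$-cycle $(\bar\Delta,0,0,1,1,0,\bar\Delta)\leftrightarrow(\bar\Delta,0,0,1,1,1,\bar\Delta)$, each step with probability $\bar\mu$, has period $2$, and return length $1$ never occurs. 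Your fallback $\tau I+(1-\tau)P$ modifies Algorithm~1, so it proves a different statement.

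For $\gamma<1$ a correct repair exists, but it needs a different argument. Follow the positive-probability path on which every service attempt fails (for $\mu=1$, only the controller fails, after the sampler has emptied in at most two slots). Along it, ages and controller occupancy only increase and sampler occupancy is frozen. The path therefore reaches a fixed point, which is a self-loop state inside every recurrent class of every deterministic policy. Then invoke a value-iteration convergence result that needs only aperiodicity of all deterministic policies (Schweitzer--Federgruen, multichain case), together with the constant gain from weak accessibility, in place of the unichain-based one.
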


\begin{proof}
If the Markov chain induced by every deterministic policy is both unichain and aperiodic, the RVI algorithm is guaranteed to converge to an optimal policy and to return the corresponding optimal value of the average AoI \cite
[Page 209]{Bertsekas2007}.  
In the proof of Lemma~1, we demonstrated that each deterministic policy induces a unichain Markov chain.  
Moreover, since the recurrent states $(\bar{\Delta}, 1, 1, 0, 0, \star, \star)$ (for $\gamma \ne 1$ and $0 < \mu \le 1$) and $(\bar{\Delta}, 0, 0, 1, 1, 0, \bar{\Delta})$ (for $\gamma = 1$ and $0 < \mu < 1$) possess self-transitions, the Markov chain induced by any deterministic policy is aperiodic \cite[Exercise 4.1]{Gallager1996}.
\end{proof}
\section{Numerical Results}
In this section, we present numerical results to evaluate the performance of the proposed policy. We compare the performance of the system with the 1-Packet system and 2-Packet system studied in~\cite{Moltafet2023}.
The optimal policies are obtained by solving the MDP formulated in Section~\ref{sec:OPTIMAL CONTROL FOR AVERAGE AoI MINIMIZATION} using the RVI algorithm. Unless otherwise specified, Algorithm~\ref{alg:RVI} is executed with a maximum AoI value $\bar{\Delta} = 55$ and a convergence threshold $\epsilon = 5\times10^{-4}$.

\subsection{Maximum  Value for the AoI}
To ensure that the selected maximum value for the AoI $\bar{\Delta}$ does not affect the system performance, we first determine an appropriate AoI upper bound, denoted by $\bar{\Delta}$. This is achieved by gradually increasing $\bar{\Delta}$ until the average AoI under the optimal policy converges and remains unchanged. Fig.~\ref{fig:saturation point} illustrates the average AoI of the optimal policy obtained via the RVI algorithm as a function of $\bar{\Delta}$ for different values of the service rate parameters $\mu$ and $\gamma$.

As observed, when $\bar{\Delta}$ is chosen to be too small, the resulting average AoI decreases significantly, which may lead to misleading conclusions regarding the system performance. Therefore, selecting a sufficiently large upper bound for AoI is crucial for ensuring accurate numerical results. Specifically, for the service rates $\mu = 0.2$ and $\gamma = 0.4$, an AoI upper bound of $\bar{\Delta} > 50$ provides sufficient accuracy for practical purposes. Similarly, for $\mu = 0.4$ and $\gamma = 0.4$, the saturation point is observed at approximately $\bar{\Delta} > 25$.

\begin{figure}
    \centering
    \includegraphics[width=1\linewidth]{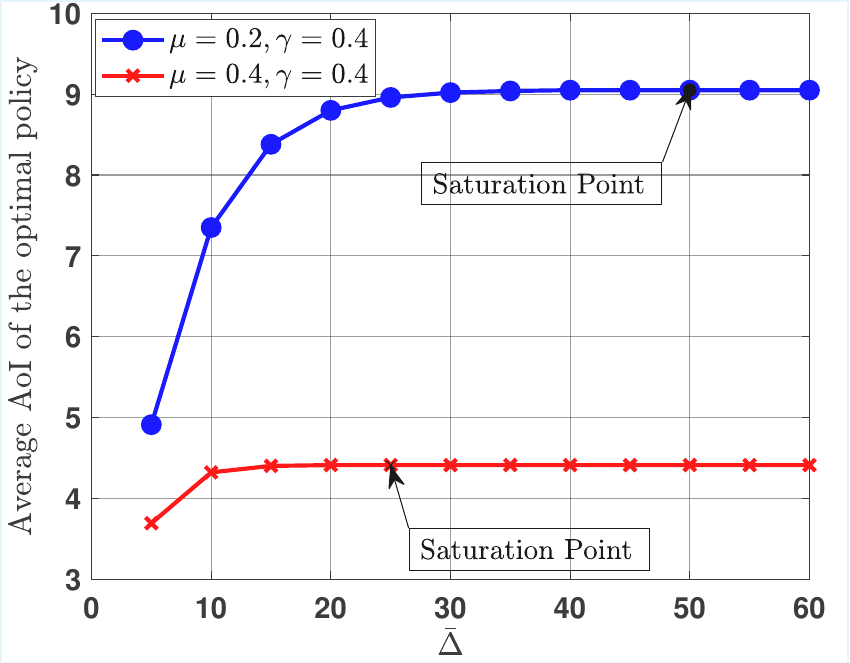}
    \caption{Average AoI of the optimal policy as a function of $\bar{\Delta}$ under different values of $\mu$ and $\gamma$.}
    \label{fig:saturation point}
\end{figure}

\subsection{Comparison with 1-Packet System and 2-Packet System}

Fig.~\ref{fig:y07} shows the average AoI as a function of the service rate $\mu$ with $\gamma=0.7$, under the optimal policies for both the 1-Packet and 2-Packet systems from~\cite{Moltafet2023}, as well as the proposed policy in Section~\ref{sec:system_model}.

As it can be seen, the proposed policy consistently achieves a lower average AoI compared to both the 1-Packet and 2-Packet systems. Additionally, for all systems, the average AoI decreases monotonically as the service rate $\mu$ increases, reflecting the intuition that faster service leads to fresher status updates and, consequently, lower AoI.

\begin{figure}
    \centering
    \includegraphics[width=1\linewidth]{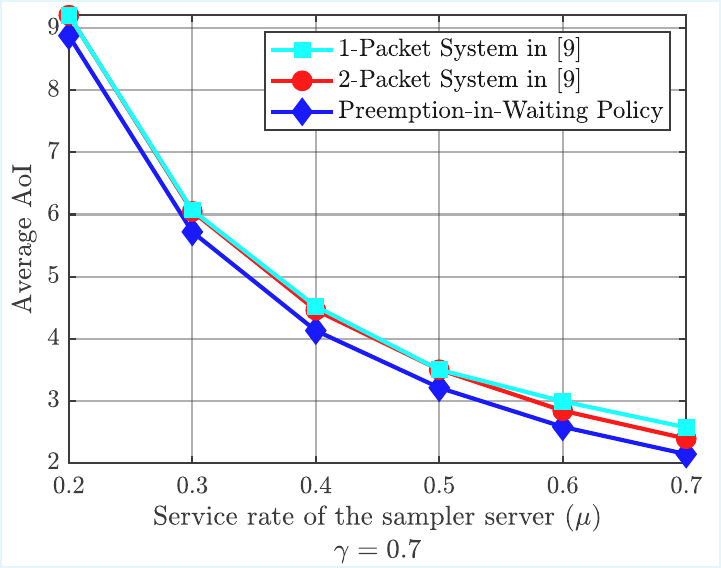}
    \caption{Average AoI versus service rate $\mu$ with $\gamma = 0.7$.}
    \label{fig:y07}
\end{figure}

\subsection{Structure of the Optimal Policy}

Fig.~\ref{fig:0001} illustrates the structure of the optimal policy when only the sampler server is occupied, i.e., the state $s = (\delta, 0, 0, 0, 1, \star, \Delta_{\mathrm{s}})$.\footnote{Similar behavior is observed for other states; however, due to space limitations, we only present the structure of the optimal policy for the state $s = (\delta, 0, 0, 0, 1, \star, \Delta_{\mathrm{s}})$.} As shown in Fig.~\ref{fig:0001}, the optimal policy exhibits a threshold behavior with respect to the age of the packet currently being served at the sampler server, $\Delta_{\mathrm{s}}$.
Specifically, for the case $\gamma = 0.1$ and $\mu = 0.1$, the optimal action is to remain idle when $\Delta_{\mathrm{s}} \leq 5$. Once $\Delta_{\mathrm{s}}$ exceeds this threshold, the optimal action switches to sending a new request packet. 



Moreover, it can be observed that for a fixed value of $\mu$, the threshold with respect to $\Delta_{\mathrm{s}}$ first increases and then decreases as $\gamma$ increases. This non-monotonic behavior can be explained by the interplay of two competing effects.
On one hand, as $\gamma$ increases from a small value, the controller server becomes faster, increasing the likelihood that the packet currently under service will be delivered within a relatively shorter time. When $\mu$ is fixed and not sufficiently large, prematurely sending new requests may lead to delivering to consecutive packets to the sink with age values close to each other and there will be a waste in the sampler server resources.  In this case, waiting for the completion of the ongoing status update becomes more beneficial, which results in an increase in the threshold.

On the other hand, when $\gamma$ becomes sufficiently large, the controller server can generate and deliver status updates at a much higher rate. As a result, the age of the packet in the sampler buffer remains relatively small, and the system can quickly refresh outdated information. In this case, proactively sending a new request becomes more advantageous, leading to a decrease in the threshold.
\begin{figure}
\centering
\setlength{\tabcolsep}{2pt}
\begin{subfigure}{0.45\textwidth}
    \centering
    \includegraphics[width=1\linewidth]{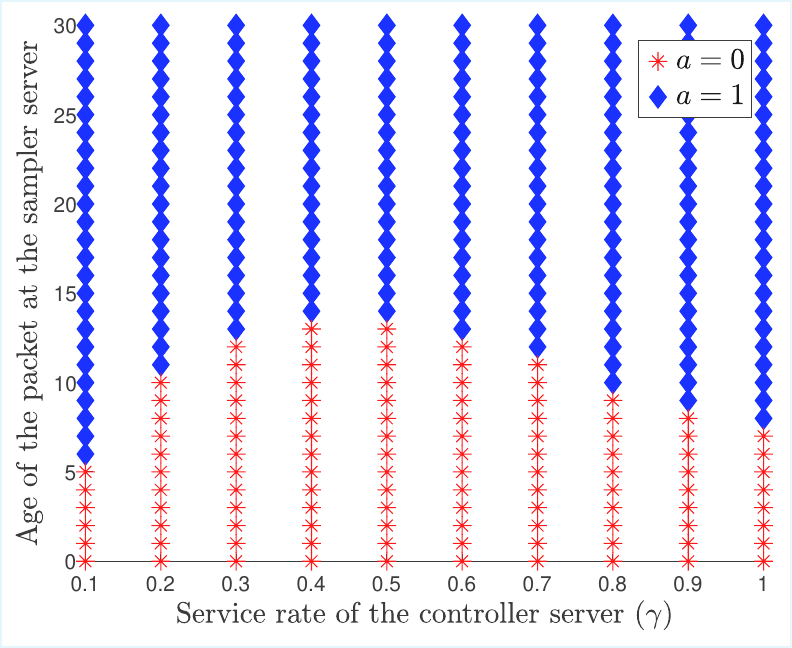}
    \caption{$\mu = 0.1$}
\end{subfigure}
\hspace{0.02\textwidth}
\begin{subfigure}{0.45\textwidth}
    \centering
    \includegraphics[width=1\linewidth]{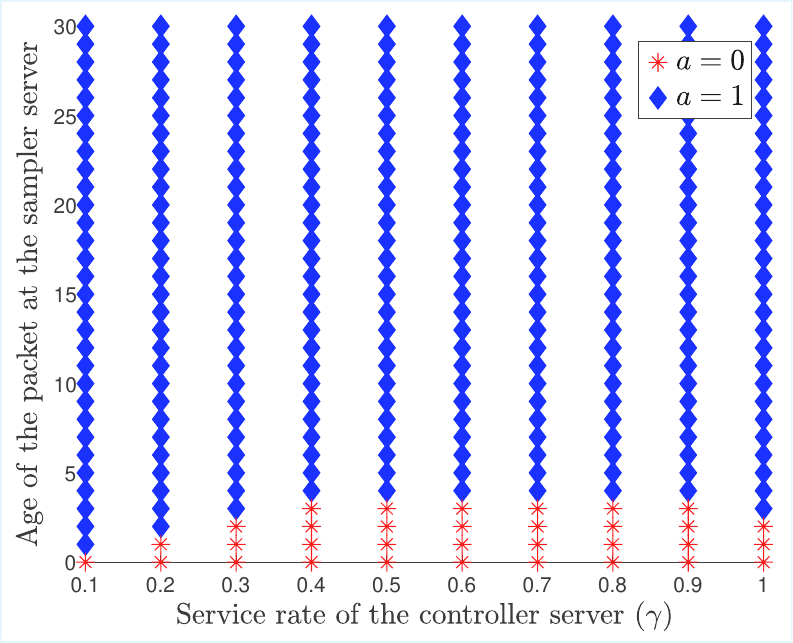}
    \caption{$\mu = 0.2$}
\end{subfigure}


\caption{Structure of the optimal policy for the system in state
$s = (\delta,0,0,0,1,\star,\Delta_{\mathrm{s}})$.}
\label{fig:0001}
\end{figure}

\section{Conclusion}
We studied the optimal status updating in a two-way system with preemption, where a controller dynamically decides when to generate request packets to minimize the average AoI. We formulated the problem as an MDP,  derived the optimal deterministic policy using the RVI algorithm, and proved the convergence of the algorithm.
Through numerical evaluations, we demonstrated that the proposed system consistently outperforms the systems without preemption in terms of average AoI.
Moreover, our results revealed a non-monotonic  threshold-based structure of the optimal policy.


\bibliographystyle{IEEEtran}  
\bibliography{reference}

\end{document}